\documentclass[11pt]{article}
\usepackage{graphicx} 

\usepackage{amsmath}
\usepackage{amssymb}

\usepackage{pre}
\usepackage[colorinlistoftodos,prependcaption,textsize=tiny]{todonotes}

\def\ANON{0} 
\def\COMM{1} 

\title{$\QACZ$ Can Prepare Every Logarithmic-Qubit State}
\ifnum\ANON=1
\author{
  Anonymous Authors.
}
\else
\author{
  Lucas Gretta \thanks{University of California at Berkeley. \ Email: \url{lucas_gretta@berkeley.edu}. \ Supported by NSF Award CCF-2231095.}
  \and 
  Meghal Gupta\thanks{University of California at Berkeley. \ Email: \url{meghal@berkeley.edu}. \ Supported by NSF GRFP.}
  \and
  Malvika Raj Joshi\thanks{University of California at Berkeley. \ Email: \url{malvika@berkeley.edu}. \ Supported by NSF grant 2311733 and DOE grant DE-SC0024124.}}
\fi
\date{}

\ifnum\COMM=1
\newcommand{\malvika}[1]{\textcolor{purple}{[\textbf{Malvika:} {#1}]}}
\newcommand{\luke}[1]{\textcolor{blue}{[\textbf{Luke:} {#1}]}}
\newcommand{\meghal}[1]{\textcolor{red}{[\textbf{Meghal:} {#1}]}}
\newcommand{\gpt}[1]{\textcolor{orange}{[\textbf{GPT:} {#1}]}}
\newcommand{\todomal}[1]{\todo[linecolor=Plum,backgroundcolor=Plum!25,bordercolor=Plum]{\textbf{@mal todo}: #1}}
\newcommand{\todoluke}[1]{\todo[linecolor=blue,backgroundcolor=blue!25,bordercolor=blue]{\textbf{@luke todo}:#1}}
\newcommand{\todomeghal}[1]{\todo[linecolor=red,backgroundcolor=red!25,bordercolor=red]{\textbf{@meghal todo} #1}}
\newcommand{\todoany}[1]{\todo[linecolor=orange,backgroundcolor=orange!25,bordercolor=orange]{\textbf{todo @plz do}: #1}}
\else
\newcommand{\malvika}[1]{}
\newcommand{\luke}[1]{}
\newcommand{\meghal}[1]{}
\newcommand{\gpt}[1]{}
\newcommand{\todomal}[1]{}
\newcommand{\todoluke}[1]{}
\newcommand{\todomeghal}[1]{}
\newcommand{\todoany}[1]{}
\fi

\begin{document}

\maketitle
\ifnum\ANON=1
\PackageWarningNoLine{Global}{Note anonymous mode}
\ifnum\COMM=1
\textcolor{red}{\textbf{Warning:}{ Comments are enabled in anonymous mode}}
\fi
\fi
\ifnum\COMM=1
\PackageWarningNoLine{Global}{Note comments are enabled}
\fi

\begin{abstract}
$\QACZ$ is the class of constant-depth $\poly(n)$-ancilla circuits obtained by extending $\QNC^0$, the class of local circuits, to include nonlocal interactions via arbitrary width Toffoli gates. It is believed to be weaker than its counterpart, $\QNC^0_f$, obtained by including arbitrary-width $\FANOUT$ gates instead ($\QACZ \subseteq \QNC^0_f$ \cite{moore1999qac0}). 

In this note, we show that every $O(\log n)$-qubit state can be exactly and cleanly prepared by a  $\poly(n)$-ancilla $\QACZ$ circuit. Previous known $\poly(n)$-ancilla circuits for arbitrary such states are only known via additional access to either $\FANOUT$ or $\QRAM$ (indexing) gates \cite{rosenthal2021query, gretta2026polylogarithmicweightdickestatesqac0}, neither of which are known to be in $\QACZ$. Equivalently, prior constructions of arbitrary $n$-qubit states in $\QACZ$ require \emph{doubly exponential} size and we obtain an exponential factor improvement.   
\end{abstract}

\section{Introduction}
Arbitrary state preparation necessarily requires nonlocal operations well beyond $\QNC^0$, the class of constant-depth local quantum circuits \cite{nielsen2000quantum, moore2001parallel}.  Formally, $\QNC^0$ circuit consists of  $O(1)$ number of layers of local (e.g. two-qubit) gates. One natural extension of $\QNC^0$ is obtained through the inclusion of $\FANOUT$ gates to enable global entangling operations. The resulting class $\QNC^0_f$ is extremely powerful and can, for instance, simulate classical $\TC^0$ circuits, perform the Quantum Fourier Transform and prepare arbitrary symmetric states \cite{hoyer2005fanout, takahashi2012collapse, gretta2026polylogarithmicweightdickestatesqac0}. Therefore, it is a natural to wonder whether $\FANOUT$ can be replaced by potentially weaker nonlocal interactions such as multi-qubit Toffoli gates. $\QACZ$ is the class of constant-depth circuits formed by multi-qubit Toffoli gates and local gates \cite{moore1999qac0}.

From a fault tolerance perspective, $\FANOUT$ may appear to be a simpler gate than Toffoli, due to it being Clifford \cite{gottesman1998heisenbergrepresentationquantumcomputers}. However, Toffoli gates can be efficiently simulated using only $\FANOUT$ and local gates, i.e. $\QACZ \subseteq \QNC^0_f = \QACZF$ \cite{hoyer2005fanout, takahashi2012collapse}. 
On the other hand,  whether $\QACZ$ can implement the $\FANOUT$ operation, or equivalently the $\PARITY$ function, is a longstanding open question \cite{moore1999qac0}. It is unknown whether $\QACZ$ can simulate all decision problems in $\ACZ$, its classical analogue, let alone compute $\PARITY$ which is very much not in $\ACZ$ \cite{hastad1986switch, lmn1993ac0}. Recently, Grier, Morris and Wu showed that $\QACZ$ can compute any \emph{symmetric function} in $\ACZ$, but their result does not apply to other natural $\ACZ$ functions such as the indexing function \cite{grier2026tc0}.  

To our knowledge, the first construction for arbitrary state preparation in this setting is due to Rosenthal in 2021 \cite{rosenthal2021query}. Rosenthal showed that every $O(\log n)$-qubit state is preparable by a $\poly(n)$-ancilla $\QACZF$ circuit. Recently, \cite{gretta2026polylogarithmicweightdickestatesqac0} showed that $\poly(n)$-ancilla $\QACZ[\QRAM]$ circuits suffice instead. These are $\QACZ$ circuits augmented with $\QRAM$, a gate that implements the indexing function. $\QACZ[\QRAM]$ can be simulated by $\QACZF$ but the converse is unknown, making $\QRAM$ a potentially weaker resource than $\FANOUT$. In this note, we describe an extension to \cite{gretta2026polylogarithmicweightdickestatesqac0} that eliminates the dependence on $\QRAM$. Specifically, we show that $O(\log n)$-qubit states can be prepared with $\poly(n)$ ancillae in plain $\QACZ$. 

We emphasize that it is remains unknown whether $\QRAM$ itself can be implemented in $\QACZ$, or whether either containment in $\QACZ \subseteq \QACZ[\QRAM] \subseteq \QACZF$ is strict.

\begin{theorem}[Arbitrary state synthesis in $\QAC$]\label{thm:anylogn}
For every $c \log n$-qubit state $\ket{\psi}$, there exists a $c_1$-depth, $n^{c_2}$-ancilla $\QACZ$ circuit that exact and cleanly prepares $\ket{\psi}$, where $c_1, c_2$ are constants that only dependent on $c$.
\end{theorem}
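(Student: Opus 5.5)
We prepare $\ket{\psi}$ on $m = c\log n$ qubits. Write $N = 2^m = n^c$; this is polynomial in $n$. We follow the standard Grover--Rudolph / binary-tree approach, which reduces state preparation to a constant number of rounds of "controlled rotations indexed by a prefix." This indexing is exactly where \cite{gretta2026polylogarithmicweightdickestatesqac0} used $\QRAM$. Our goal is to replace each such indexed operation by a $\QACZ$ gadget that exploits the small register size. The key observation is that any Boolean function of $m = O(\log n)$ bits is computable by a depth-$2$ $\ACZ$ circuit (a DNF) of size $2^m = \poly(n)$. Moreover, we may take all minterms to be \emph{mutually exclusive}, so an OR of them is a sum of orthogonal events.

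\textbf{Step 1 (blocking).} Split the $m$ qubits into $k = O(c)$ blocks of $b = (\log n)/C$ qubits each, for a suitable constant $C$. This keeps the work per block polynomial even after the blowups below. Write $\ket{\psi}=\sum_{x_1}\alpha_{x_1}\ket{x_1}\ket{\psi_{x_1}}$ and recurse: we prepare block $j$ conditioned on the prefix $x_{<j}$. The $k$ rounds are sequential, so the total depth is $O(k)=O_c(1)$.

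\textbf{Step 2 (one round without $\QRAM$).} Given the prefix register $\ket{x_{<j}}$ of $O(\log n)$ qubits, compute a \emph{one-hot} encoding. Allocate $\poly(n)$ ancillas $a_p$, one per possible prefix $p$. Set $a_p = [x_{<j}=p]$ using one Toffoli per $p$, with negated controls implemented by conjugating with $X$ gates.

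The Toffolis all read the same prefix qubits, which seems to require fanout of the prefix. But each prefix qubit must only be copied $\poly(n)$ times, and its state is a superposition. I would handle this as follows:
\begin{enumerate}
\item Compute, via a Toffoli on each minterm over the \emph{original} qubits, the OR-structured copies. Equivalently, use the known $\QACZ$ trick that a register of $O(\log n)$ qubits can be copied $\poly(n)$ times. One way is to compute the one-hot indicators $a_p$ first (each $a_p$ is a single AND of the literals, so the Toffolis only need the prefix qubits as controls).
\item Controls may be shared across simultaneous gates only if the model allows it. Since it does not, compute the $a_p$ in $N^{O(1)}$ parallel copies by a \emph{disjoint-minterm} tree, using the fact that $\mathrm{copy}(x_i) = \mathrm{OR}_{p: p_i=1} a_p$. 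Because the $a_p$ are mutually exclusive, this OR is computed exactly by one Toffoli with negated controls.
\end{enumerate}
This bootstraps: a few $a_p$ in depth $O(1)$ yield more copies of the bits, which yield more $a_p$, with polynomial growth per layer. Since $x$ has only $O(\log n)$ bits, the width needed is $N^{O(1)}$, reached after $O(1)$ rounds if each round squares the number of copies.

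Once $a_p$ is available, apply the $b$-qubit state-preparation unitary $U_p$ for $\ket{\psi_p}$ (restricted to block $j$) controlled on $a_p$. Since the $a_p$ are mutually exclusive, $\sum_p \ket{p}\bra{p}\otimes U_p = \prod_p \mathrm{C}_{a_p}U_p$, and these commute.

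\textbf{Step 3 (realizing the parallel controlled-$U_p$ and cleanup).} To make the $N$ controlled unitaries act in parallel on a single target block, we use the same trick as \cite{gretta2026polylogarithmicweightdickestatesqac0}:
\begin{enumerate}
\item Swap the target block into the ancilla slot $p$, controlled on $a_p$. This is a multi-target controlled-SWAP, done via Toffolis in which the target block is the data and the $a_p$ are the controls. At most one slot is active, so the "collection" step $\mathrm{OR}_p(\text{slot}_p)$ back to the target is again an exclusive OR, hence a single Toffoli with negated controls.
\item Each $U_p$ acts on only $b$ qubits. It is therefore a fixed unitary on $2^b = n^{1/C}$ dimensions and must be implemented in \emph{constant} depth. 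We handle this recursively: by induction on the theorem with a smaller constant, or directly, since a $b$-qubit unitary with $b = (\log n)/C$ has a $\poly(n)$-size constant-depth $\QACZ$ implementation via the same one-hot decomposition into $2^{2b}$ two-level rotations grouped into $O(1)$ commuting layers.
\end{enumerate}
Finally, uncompute the $a_p$ and the copies by running the computation in reverse. Everything is exact, so the ancillas return to $\ket{0}$ cleanly.

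\textbf{Main obstacle.} I expect the hardest part to be the apparent need for fanout: using the prefix qubits as controls of $\poly(n)$ parallel gates, and implementing an arbitrary $b$-qubit unitary in constant depth. The lever I would use first is mutual exclusivity of one-hot indicators, which turns every needed OR or PARITY into a single Toffoli, together with the fact that all index spaces have size $2^{O(\log n)}=\poly(n)$.
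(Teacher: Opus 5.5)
Your proposal has a genuine gap in Step 2. Because your Grover--Rudolph recursion keeps the prefix in binary, every round needs the decoder $\ket{x}\ket{0^N}\mapsto\ket{x}\ket{e_x}$. This is the inverse of the map $\ket{e_i}\ket{i}\mapsto\ket{0^n}\ket{i}$, which the paper singles out as the obstacle. Decoding the address, applying the depth-$3$ tribes gadget $b\mapsto b\oplus\bigvee_j(a_j\wedge y_j)$ to a memory register $y$, and running the decoder backwards implements $\QRAM$. Whether $\QRAM\in\QACZ$ is open.

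Your bootstrap does not get around this. A single indicator $a_p$ carries only the bit $[x=p]$. Recovering even one copy of $x_i=\bigvee_{p:p_i=1}a_p$ requires half of all indicators to be present already, and producing $T$ copies requires each $a_p$ to be fanned out $T$ times, so nothing squares from round to round. There is also no known trick for copying an $O(\log n)$-qubit register $\poly(n)$ times. Any $\QACZ$ circuit that copies even one qubit $T$ times, with arbitrary garbage, yields $\FAN_T$: copy, CNOT the copies onto the targets, uncopy. For $T=\poly(n)$ that gives $\PARITY\in\QACZ$. Only $\FAN_{\polylog(n)}$ is available, and the classical DNF fact you start from relies on exactly the unbounded fanout that $\QACZ$ lacks. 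The same hidden fanout appears in Step 3.1: moving the target into slot $p$ is a distribution from one register to $N$ slots, not an OR.

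Your base case is also circular. A $(\log n)/C$-qubit state is a $\log n'$-qubit state for $n'=n^{1/C}$, so induction on the constant never bottoms out. The $4^b$ two-level rotations of a generic unitary do not in general group into $O(1)$ commuting layers. And $C_{a_p}U_p$ naively needs $a_p$ to reach every gate of $U_p$.

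The missing idea is to never decode. The paper prepares $\sum_i\gamma_i\ket{e_i}$ directly in unary, then computes the binary index from it. That is the easy direction: each $a_i$ feeds only $\log n$ ORs. Instead of uncomputing the one-hot register, it observes that $\sum_i\gamma_i\ket{e_i}\ket{i}=\frac{1}{\sqrt n}\ket{W_n}\ket{\psi_*}+\cdots$. It marks that branch with the reflection about the cleanly preparable $\ket{W_n}\ket{-}$. The resulting $1/n$ success probability is boosted by parallel amplification, whose one-hot-controlled swaps need only $\FAN_\ell$ by \cref{lem:ctrswap}; that lemma handles the target-to-slot direction by Hadamard conjugation. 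The leftover garbage is removed by the dirty-to-clean step, \cref{cor:clean}, applied to $(\ket{0^\ell}\ket{0}+\ket{\psi_*}\ket{1})/\sqrt2$.
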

\noindent Note that naive preparations of arbitrary states in constant-depth are \emph{doubly-exponential} in the number of target qubits (thus $\exp(n)$ for $O(\log n)$ qubits). \cref{thm:anylogn} implies arbitrary state preparation in $\QACZ$ using exponential ancillae in the number of target qubits. We provide the proof in the following section \ref{proof:here}.

\section{Arbitrary state preparations in \texorpdfstring{$\QACZ$}{QAC0}}
The proof heavily uses the state-preparation machinery introduced in \cite{gretta2026polylogarithmicweightdickestatesqac0} and follows the same notations and conventions. We also assume familiarity with standard results about $\QACZ$, and refer the reader to the preliminaries in \cite{gretta2026polylogarithmicweightdickestatesqac0}. 

For $i \in [n]$, let $e_i$ denote the $n$ bit string $0^{i-1} 1 0^{n-i-1}$.
It was shown in Lemma 4.1 of \cite{gretta2026polylogarithmicweightdickestatesqac0} that one can prepare arbitrary superpositions over the $\ket{e_i}$ in $\QACZ$. 
One can also implement the map $\ket{e_i} \ket{0} \mapsto \ket{e_i} \ket{i}$ in $\QACZ$, since it has a $\log n$-fanout classical circuit. However, the key challenge lies in uncomputing the one-hot register, i.e. the map $\ket{e_i} \ket{i} \mapsto \ket{0^n} \ket{i}$, which is equivalent to $\QRAM$. As we show below, this map is not necessary for the construction. 

\paragraph{Proof Outline.}  First, we weakly approximate  the target state $\ket{\psi_*}$ by preparing a state with a marked component on $\ket{\psi_*}$ of at least $1/\poly(n)$ amplitude in \cref{cl:smampl}. Next, we present a stronger version of the ``parallel amplification'' technique, to amplify this amplitude without relying on $\FANOUT$. The idea of parallel amplification was first introduced in \cite{rosenthal2021qac0}, and has since appeared in various forms in subsequent works \cite{cleandicke2026, gretta2026paritynotinqac0iff, gretta2026polylogarithmicweightdickestatesqac0}. In \cref{lem:ctrswap}, we show that amplifying the success probability of a marked state requires $\FAN$ proportional to the \emph{number of target qubits} rather than the number of parallel runs. In particular, this $1/\poly(n)$ weak approximation of $\ket{\psi_*}$ can be amplified with $\FAN_{\log n}$, even though it requires $\poly(n)$ parallel preparations. 

Finally, we clean up the ancillae using a general approach, stated in \cref{cor:clean}. 

\begin{claim}\label{cl:smampl}
There exist constants $c_0,c_1$ such that, for every $\ell$-qubit state $\ket{\psi_*} = \sum_{i \in [n]} \gamma_i \ket{i}$ where $n = 2^{\ell}$ and $\gamma_i \in \mathbb{C}$ are arbitrary normalized amplitudes, there is a $c_0$-depth, $n^{c_1}$-ancilla $\QACZ$ circuit $C_{\varphi}$  to prepare the state, 
$$\ket{\varphi} := \frac{1}{\sqrt{n}} \ket{W_n}  \ket{\psi_*} \ket{1} + \sqrt{1-\frac{1}{n}} \ket{\bad} \ket{0}$$
\end{claim}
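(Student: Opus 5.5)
Our plan is to prepare the $W$-state register first, then in each one-hot branch write the binary index into a fresh register, and finally test whether that index agrees with an independently prepared copy of $\ket{\psi_*}$. By Lemma 4.1 of \cite{gretta2026polylogarithmicweightdickestatesqac0}, we prepare on $n$ qubits the state $\sum_{i\in[n]}\gamma_i\ket{e_i}$ in constant depth with $\poly(n)$ ancillae. We apply the classical map $\ket{e_i}\ket{0^\ell}\mapsto\ket{e_i}\ket{i}$: the $j$-th output bit is the OR of those $e_i$ with $i_j=1$. This is a single multi-controlled Toffoli (with $X$ conjugations) per output bit. The input qubits are only used as controls, and each input fans out to at most $\ell=\log n$ targets. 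So it can be done in constant depth with $O(\log n)$ copies via the standard small-fanout simulation in $\QACZ$. This yields $\sum_i \gamma_i\ket{e_i}\ket{i}$, where the one-hot register is entangled with the target and is not uncomputed.

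To obtain the uniform $\ket{W_n}$ factor, we proceed in a second, independent register. We prepare $\frac{1}{\sqrt n}\sum_j\ket{e_j}$ using Lemma 4.1 again. We then apply a controlled comparison between the two one-hot registers: for each $j$, a Toffoli controlled on the $j$-th qubit of the $\gamma$-register and the $j$-th qubit of the uniform register flips a flag ancilla. Since at most one $j$ contributes, the OR over $j$ is a single multi-controlled gate on $n$ two-qubit ANDs. These ANDs can be computed into $n$ ancillae by parallel Toffolis, after which one wide Toffoli computes the OR into the flag. The flag is $1$ exactly when $j=i$, with amplitude $\frac{1}{\sqrt n}\gamma_i$. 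This gives the branch $\frac{1}{\sqrt n}\sum_i\gamma_i\ket{e_i}\ket{e_i}\ket{i}\ket{1}$. It remains to regroup the registers into the form $\ket{W_n}\ket{\psi_*}\ket{1}$.

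The main obstacle is disentangling the one-hot registers from $\ket{i}$ in the good branch, which is exactly the \QRAM-type uncomputation the paper avoids. We handle it by choosing which registers are declared the output rather than uncomputing them. In the good branch we apply CNOTs from the $\gamma$-register onto the uniform register, controlled on the flag. Since $e_i\oplus e_i=0$, this clears the uniform register, and doing so needs only fanout $1$ per qubit. The good branch is then $\frac{1}{\sqrt n}\sum_i\gamma_i\ket{e_i}\ket{0^n}\ket{i}\ket{1}$, which still has $\ket{e_i}$ correlated with $i$.

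To break that correlation, we would instead start from $\frac{1}{\sqrt n}\sum_{j}\ket{e_j}$ (the $\ket{W_n}$ register) and $\ket{\psi_*}$ prepared on a separate target. We compute $\ket{i}$ from the $\gamma$-one-hot register and compare it bitwise with the target register via $\ell$ XORs; an all-zero test, implemented by one Toffoli, sets the flag. The $\ket{W_n}$ register is then untouched, and its $1/\sqrt n$ amplitude is realized by a second comparison against a uniform index. We then allow all garbage (one-hot copies and comparison results) to sit inside the good branch as long as it is a fixed product state, and we reset it using the flag as control with fanout $O(\log n)$. We expect this bookkeeping, which ensures that the good branch factors exactly as $\ket{W_n}\ket{\psi_*}\ket{1}$ with amplitude precisely $1/\sqrt n$ (possibly after padding amplitudes with a rotated ancilla so the success amplitude is exactly $1/\sqrt n$ rather than merely at least that), to be the delicate step. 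The depth and ancilla bounds remain constant and $n^{O(1)}$ throughout.
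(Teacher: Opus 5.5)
\textbf{There is a genuine gap: the proposal never produces a good branch of the form $\ket{W_n}\otimes\ket{\psi_*}$.} You compare the $\gamma$-weighted one-hot register with an independent uniform one-hot register, and the flagged branch is $\frac{1}{\sqrt n}\sum_i\gamma_i\ket{e_i}\ket{e_i}\ket{i}\ket{1}$. After your CNOTs it becomes $\frac{1}{\sqrt n}\sum_i\gamma_i\ket{e_i}\ket{0^n}\ket{i}\ket{1}$. As you note, the one-hot register still records $i$. So the reduced state on the index register is the dephased mixture $\sum_i|\gamma_i|^2\kb{i}$, and no regrouping turns it into $\ket{\psi_*}$.

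This failure is inherent. Any test that compares one-hot registers in the computational basis (ANDs and ORs of their bits) leaves the which-$i$ information in place.

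Your last paragraph does not close the gap. It assumes $\ket{\psi_*}$ is already prepared on a separate target, which is exactly what the claim is meant to provide, so the argument is circular. The flag-controlled reset of the garbage also fails. The garbage $\ket{e_i}$ is not a fixed product state, and resetting it is precisely the $\QRAM$-type uncomputation $\ket{e_i}\ket{i}\mapsto\ket{0^n}\ket{i}$ that must be avoided. Separately, CNOTs on all $n$ qubits controlled on a single flag need fanout $n$ of that flag, not fanout $1$.

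\textbf{The missing idea is that no second register or comparison is needed.} Every $\ket{e_i}$ has the same overlap $1/\sqrt n$ with $\ket{W_n}$, so
\[
(\langle W_n|_A\otimes I_Q)\sum_{i}\gamma_i\ket{e_i}_A\ket{i}_Q=\frac{1}{\sqrt n}\ket{\psi_*}_Q .
\]
Hence the state from your first paragraph already equals $\frac{1}{\sqrt n}\ket{W_n}_A\ket{\psi_*}_Q+\sqrt{1-\frac1n}\ket{\nu'}_{A,Q}$, with $\ket{\nu'}$ orthogonal to $\ket{W_n}$ on $A$. What erases the index coherently is projecting onto the uniform superposition, not testing in the computational basis.

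To mark this component, apply $R=I-2\kb{W_n}\otimes\kb{-}$ to $A$ and a fresh qubit $t$ in $\ket{0}$. This maps $\ket{W_n}\ket{0}\mapsto\ket{W_n}\ket{1}$ and fixes everything orthogonal to $\ket{W_n}$ on $A$. $R$ lies in $\QACZ$ because $\ket{W_n}$ is cleanly preparable. Concretely, $R$ is the clean $W_n$-preparation circuit conjugating the reflection $I-2\kb{0^n}\otimes\kb{-}$, and that reflection is a single wide Toffoli with $X$ conjugations on its controls. The success amplitude is then exactly $1/\sqrt n$, so no padding step is required.
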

\begin{proof}
Begin by preparing the desired superposition on $n$-qubit one-hot vectors, using Lemma 4.1 of \cite{gretta2026polylogarithmicweightdickestatesqac0}. 
Note although this lemma as stated only describes non-negative real amplitudes, one can extend it to arbitrary amplitudes by applying a layer of single-qubit phase gates.  
\begin{align}
\ket{\psi_0} := \sum_{i \in [n]} \gamma_i \ket{e_i}_A 
\end{align} 
Next, using the map $\ket{e_i} \ket{0^{\ell}} \mapsto \ket{e_i} \ket{i}$, which is in $\QACZ$ (Fact 3.12 \cite{gretta2026polylogarithmicweightdickestatesqac0}), obtain,  
\begin{align}
    \ket{\psi_1} &:= \sum_{i \in [n]} \gamma_i \ket{e_i}_A \ket{i}_Q \\
    &= \frac{1}{\sqrt{n}}\ket{W_n}_A   \ket{\psi_*}_Q  + \sqrt{1-\frac{1}{n}} \ket{\nu'}_{A,Q}, \label{eq:herew}
\end{align}
where $\ket{\nu'}$ is an arbitrary state orthogonal to the $\ket{W_n}$ subspace on $A$. 

Since the state $\ket{W_n} \ket{-}$ is cleanly preparable in $\QACZ$, one can also implement the reflection $R := (I - 2\kb{W_n} \otimes \kb{-})$ in $\QACZ$. Applying this on $A$ and a fresh register $t$ gives, 
\begin{align}
    \ket{\psi_2} &:= R(A,t) \ket{\psi_1}_{A,Q} \ket{0}_t \\ 
    &= \frac{1}{\sqrt{n}}\ket{W_n}_A  \ket{\psi_*}_Q \ket{1}_t +  \sqrt{1-\frac{1}{n}} \ket{\nu'}_{A,Q} \ket{0} 
\end{align}
\end{proof}
\noindent To amplify the amplitude on $\ket{\psi_*}$ without blowing up the depth, we will first show that the following restricted controlled-swap primitive from \cite{gretta2026polylogarithmicweightdickestatesqac0} can be implemented in $\QACZ$ without $\FAN_n$.  
\begin{lemma}\label{lem:ctrswap}
For a $n$ qubit registers $A = {a_1, a_2 \dots a_n}$  and  $B = {b_1, b_2 \dots b_n}$ and single qubit $t$, the following map can be cleanly implemented in $\QACZ$. 
$$\ket{e_i}_A \ket{\psi}_{B,t}  \mapsto \ket{e_i}_A \lr{\swap(b_i, t) \otimes I_{B\setminus b_i}} \ket{\psi}_{B,t}$$ 
for all $i \in [n]$, and arbitrary $n+1$ qubit state $\ket{\psi}_{B,t}$ 
\end{lemma}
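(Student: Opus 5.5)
The plan is to implement the map as a product of $n$ controlled-swaps, $\prod_{i\in[n]} \mathrm{C}_{a_i}\text{-}\swap(b_i,t)$, and to avoid sequencing them through the shared qubit $t$. On one-hot inputs at most one control is active. Applied naively, however, all $n$ gates touch $t$, so they cannot be placed in one layer. Copying $t$ into $n$ positions would need $\FAN_n$, which is what we must avoid.

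The key observation is that the swap decomposes into three CNOTs: $\swap(b_i,t) = \mathrm{CNOT}_{b_i\to t}\,\mathrm{CNOT}_{t\to b_i}\,\mathrm{CNOT}_{b_i\to t}$. Conditioned on $a_i$, each CNOT becomes a Toffoli. I will implement the three controlled layers separately.

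\textbf{First and third layers.} These compute $t \mathrel{\oplus}= \bigoplus_i (a_i \wedge b_i)$. On the one-hot promise this equals $t \mathrel{\oplus}= \bigvee_i (a_i\wedge b_i)$, since at most one term is nonzero. This is a single OR of ANDs. It can be computed into $t$ in $\QACZ$ as follows:
\begin{itemize}
\item compute each $a_i\wedge b_i$ with a Toffoli into a fresh ancilla $z_i$;
\item apply one large (negated-control) Toffoli to XOR the OR of the $z_i$ into $t$;
\item uncompute the $z_i$.
\end{itemize}
Every step is exact and clean on the promise subspace, and none of them reads $t$ as a control.

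\textbf{Middle layer.} This computes $b_i \mathrel{\oplus}= a_i\wedge t$ for every $i$ simultaneously. Here $t$ is used as a control $n$ times, which is the real obstacle, since it looks like fanout of $t$.

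I would handle it by exploiting the one-hot structure. The one-hot vector $A$ itself should carry the ``address'' so that no copies of $t$ are needed. Concretely, first swap $t$ into a location that is adjacent to only the active $b_i$; that is, route the single qubit $t$ using the one-hot pattern, rather than copying it.

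The first thing I would try is to replace the middle layer by a telescoping trick. Write $t$'s value as $x$. The goal of the middle layer is to flip $b_i$ exactly when $a_i=1$ and $x=1$. This can be done by a \emph{phase} formulation. Conjugate each $b_i$ by Hadamards, so that $\mathrm{C}_{a_i,t}\text{-}X(b_i)$ becomes $\mathrm{C}_{a_i,t}\text{-}Z(b_i)$. These controlled-$Z$'s are all diagonal, so their product equals a single phase $(-1)^{x\cdot \bigvee_i(a_i\wedge b_i')}$. This phase can be implemented by computing $y=\bigvee_i (a_i\wedge b_i')$ into an ancilla as above, applying $\mathrm{CZ}(y,t)$, and uncomputing $y$.

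Thus all three layers reduce to computing an OR-of-ANDs, which needs only one interaction with $t$. I expect verifying this diagonal/Hadamard reduction on superpositions over $B$ to be the main point to check carefully. The argument should hold because every gate is exact on each basis state $\ket{e_i}$, and linearity extends it to arbitrary $\ket{\psi}_{B,t}$. Each layer has constant depth and uses $O(n)$ ancillae, so the whole map is a clean constant-depth $\QACZ$ circuit.
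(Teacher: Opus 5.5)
Your proposal is correct and matches the paper's proof: you split each controlled swap into three Toffoli layers, realize the two outer layers as the OR-of-ANDs (tribes) function computed into $t$ under the one-hot promise, and reduce the middle layer to the same function by Hadamard conjugation. The only cosmetic difference is that the paper conjugates by Hadamards on both $B$ and $t$, so the middle layer is literally the outer layer sandwiched between copies of $H^{\otimes n+1}_{B,t}$; you instead conjugate only $B$ and kick the OR-of-ANDs phase onto $t$ through an ancilla and a $\mathrm{CZ}$, which is equivalent, and your ``routing'' digression is unnecessary.
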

\noindent Note the action of the map outside the Hamming weight $1$ subspace on $A$ is allowed to be arbitrary.
\begin{proof}
The desired map has the same action as the unitary $U = \prod_{i \in [n]} \cswap(a_i, b_i, t)$ restricted to the subspace $\S := \spn \clr{\ket{0^n}, \ket{e_1}, \ket{e_2} \dots \ket{e_n}}$ on $A$.  
The proof follows from the same decomposition of the restricted map $U_{\S}$ as in \cite{cleandicke2026}. 

Recall $\cswap(a_i, b_i, t) = \ccnot(a_i, b_i, t) \cdot \ccnot(a_i, t, b_i) \cdot \ccnot(a_i, b_i, t)$ and define,
\begin{align}
    U_1 &:= \prod_{i \in [n]} \ccnot(a_i, b_i, t), \\
    U_2 &:= \prod_{i \in [n]} \ccnot(a_i, t, b_i) \\
    &= \prod_{i \in [n]} H_{b_i} \otimes H_{t} \ccnot(a_i, b_i, t) H_{b_i} \otimes H_{t} \\
    &= H^{\otimes n+1}_{B,t} \cdot U_1 \cdot H^{\otimes n+1}_{B,t} .\label{eq:here1}
\end{align}
The $\ccnot$ gates on different $(a_i,b_i)$ commute inside $\S$, therefore $(U_1 \cdot U_2 \cdot U_1)|_{\S} = U|_{\S}$ and due to \cref{eq:here1} it suffices to implement $U_1|_{\S}$ in $\QACZ$.

Observe that since the $\PARITY$ and $\OR$ functions on $A$ are equivalent inside $\S$, 
$U_1|_{\S}$ is equivalent to computing following tribes function on $t$,  
\begin{align}
    f(x, y) = \bigvee_{i \in [n]} (x_i \land y_i)
\end{align}
Formally, for $x \in \clr{e_i}_{i \in [n]}$ and $y \in \bin^n$ and $z \in \bin$, 
\begin{align}
    U_1 \ket{x}_A \ket{y}_B \ket{z}_t &= \ket{x}_A \ket{y}_B \ket{z \oplus f(x,y)}_t 
\end{align}
$f(x,y)$ is a read-once formula and can be cleanly computed in depth-$3$ $\QACZ$ as follows: (1)  compute each $x_i \land y_i$ in parallel onto fresh $n$-qubit ancilla register $Q$ (2) compute the $\OR$ of $Q$ onto $t$, i.e, the gate $X_t \cdot (I - 2\kb{0^n}_Q \otimes \kb{-}_t)$ (3) uncompute $Q$ by running the same gates as (1).

Consequently, $U|_{\S}$ can be implemented by a depth-$9$ $\QACZ$ circuit.
\end{proof}

\noindent This provides a stronger version of Claim 4.2 \cite{gretta2026polylogarithmicweightdickestatesqac0}.
\begin{restatable}[Parallel amplification with limited fanout]{corollary}{parallelamp} \label{cor:parallel}
For any $\ell$-qubit state $\ket{\psi_*}$, suppose there exists a depth $d$, $m$-ancilla $\QACZ$ circuit $C$ to prepare the state 
$$\ket{\varphi} := \sqrt{\alpha}  \ket{\psi_*}_Q \ket{\mu}_A \ket{1} + \sqrt{1-\alpha} \ket{\nu}_{Q,A} \ket{0},$$
for some $\alpha \in (0,1]$ such that $\alpha^{-1} \leq \poly(n)$ and arbitrary states $\ket{\mu}$ and $\ket{\nu}$.
Then there exists a $\QACZ[\FAN_{\ell}]$ circuit $C'$ of depth $d' = O(d)$ using $m' = O((m+n) /\alpha + 1/\alpha^2)$ ancillae that constructs $\ket{\psi_*} \ket{\eta}$ where $\ket{\eta}$ is an arbitrary ancilla register.   
\end{restatable}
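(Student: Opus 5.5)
The plan is to run $k = \Theta(\log(n)/\alpha)$ independent copies of $C$ in parallel, so that with high probability some copy has its flag set. We then move the $\ell$-qubit $Q$ register of the first successful copy into a fixed output register, using only $\FAN_\ell$ together with the controlled-swap primitive of \cref{lem:ctrswap}.

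Concretely, choose $k = \lceil c\ln n/\alpha\rceil$ and prepare $\ket{\varphi}^{\otimes k}$ with flags $f_1,\dots,f_k$. This costs $O(km)$ ancillae and depth $d$. The probability that every flag is $0$ is $(1-\alpha)^k \le n^{-c}$.

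Next, compute the one-hot ``first success'' indicator $s_j := f_j \wedge \bigwedge_{j'<j}\neg f_{j'}$ into a fresh $k$-qubit register $S$. Each $s_j$ is a single Toffoli gate with negated controls. However, the flags $f_{j'}$ are each read by up to $k$ different Toffolis. We therefore first copy each flag $k$ times, which can be done in $\QACZ$ using $O(k^2)$ ancillae. This is the source of the $1/\alpha^2$ term in $m'$, and the extra $\log n$ factors can be absorbed into $\poly(n)$. On the success branch, $S$ holds $\ket{e_j}$ with $j$ the first successful index. Conditioned on $S=\ket{e_j}$, the $j$-th $Q$ register equals $\ket{\psi_*}$ and is unentangled from everything else.

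To route $Q_j$ to an output register $O$ initialized to $\ket{0^\ell}$, we apply, for each qubit position $p\in[\ell]$, the map of \cref{lem:ctrswap}. Its control is $S$, its register $B$ is $(Q_1[p],\dots,Q_k[p])$, and its target $t$ is $O[p]$. This swaps $Q_j[p]$ with $O[p]$.

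The $\ell$ instances of the lemma all read the same control register $S$. We therefore first use $\FAN_\ell$ to make $\ell$ copies of $S$, apply the $\ell$ controlled-swaps in parallel, and then leave the copies in place, since they remain part of the junk state. Equivalently, for each $j$ we fan out the single bit $s_j$ to $\ell$ copies. This uses $\FAN_\ell$ gates only, never fanout of width $k$, which is exactly what the corollary permits. After these steps, on the success branch $O$ holds $\ket{\psi_*}$ in tensor product with the remaining registers $\ket{\eta}$.

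The main obstacle is exactness: the all-fail branch has amplitude $n^{-c/2}$, not zero, and the statement claims to construct $\ket{\psi_*}\ket{\eta}$ exactly. I would try to make the failure branch also output $\ket{\psi_*}$. One option is to append a final copy that is guaranteed to succeed, but such a copy does not exist. The alternative I would pursue is to use the fact that the parameter $\alpha$ is known exactly. Following Claim 4.2 of \cite{gretta2026polylogarithmicweightdickestatesqac0}, we can first boost each copy to success amplitude exactly $1/2$ or reduce $\alpha$ to a convenient value by a rotation on the flag. We then use a single round of exact amplitude amplification on the OR of the flags: the reflection about the prepared state is available because $C$ is a $\QACZ$ circuit, and the reflection about ``some flag $=1$'' is a single Toffoli. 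Tuning $k$ and the rotation angle so that the success probability of the OR becomes exactly $1/4$, one Grover iterate makes the success probability exactly $1$.

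Exactness rests entirely on this step: that $k$ can be chosen and the angle adjusted so that the OR's success probability is precisely $1/4$, and that the ancillae carrying the copied flags are uncomputed cleanly enough that the Grover reflection acts correctly. Everything else is standard bookkeeping of depth $O(d)$ and ancilla count $O((m+n)/\alpha + 1/\alpha^2)$.
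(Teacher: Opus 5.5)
Your proposal has a genuine gap at the step that computes the first-success indicator $s_j = f_j\wedge\bigwedge_{j'<j}\neg f_{j'}$.

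\textbf{The gap.} You say each flag can be copied $k$ times ``in $\QACZ$ using $O(k^2)$ ancillae''. The flags are not classical, however. Each $f_j$ is entangled with $Q_j,A_j$, so the map $\ket{b}\ket{0^k}\mapsto\ket{b}\ket{b^k}$ on such a qubit is by definition a $\FAN_k$ gate. Here $k=\Theta(1/\alpha)$, which can be polynomial in $n$; in the application, $\alpha=1/n$. $\FAN_k$ for polynomial $k$ is equivalent, up to Hadamard layers, to $\PARITY_k$. Whether $\QACZ$ can do this is exactly the open problem the paper is built to sidestep, and the corollary only grants $\FAN_\ell$. Applying the overlapping Toffolis sequentially instead costs depth $\Omega(k)$, since $f_1$ is a control in $k-1$ of them.

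You also cannot drop the selection step in your route. Amplifying ``some flag is $1$'' leaves a superposition over all nonzero flag strings. \cref{lem:ctrswap} only acts correctly when its control register is one-hot, so without a unique index the swap does nothing useful.

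\textbf{The missing idea.} Amplify a different event. The paper takes $t=\lfloor 1/\alpha\rfloor$ copies and marks ``exactly one flag is $1$'' using the $\EXACT_1$ gate, which is in $\QACZ$ with $O(1/\alpha^2)$ ancillae. That gate, not flag-copying, is where the $1/\alpha^2$ term in $m'$ comes from. The marked probability is $\Pr[\binomd(t,\alpha)=1]=\Theta(1)$ and is known exactly. So a constant number of rounds of exact amplitude amplification gives the uniform superposition over $i$ of branches in which only copy $i$ succeeded.

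In that state the flag register $V$ is already one-hot. It can serve directly as the control of \cref{lem:ctrswap}, replicated across the $\ell$ qubit positions with $\FAN_\ell$ exactly as you describe, so no first-success computation is needed.

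\textbf{What carries over.} Your exactness mechanism transfers to this event: tune the marked probability to $1/4$ and apply one Grover iterate. To lower the success probability, add a qubit $r$ prepared in $\sqrt{\beta}\ket{1}+\sqrt{1-\beta}\ket{0}$ and use the new flag $f\wedge r$. Rotating the flag qubit itself would instead mix marked and unmarked branches.

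Finally, your first, high-probability variant with $k=\Theta(\log n/\alpha)$ copies exceeds the stated ancilla bound by a $\log n$ factor.
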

\noindent Thus, when $\ell = \polylog(n)$, the above amplification runs in $\QACZ$. 
The proof is exactly the same as Claim 4.7 of \cite{gretta2026paritynotinqac0iff} but uses the stronger \cref{lem:ctrswap} instead. We provide it in \cref{sec:aproofs}. 

The final piece is the following general consequence of Lemma 10 of \cite{gretta2026polylogarithmicweightdickestatesqac0} that additionally provides a \emph{clean preparation} of $\ket{\psi_*}$. 
\begin{corollary}[Dirty-to-Clean state preparation]\label{cor:clean}
For any $\ell$-qubit state $\ket{\psi_*}$, suppose there exists a depth $d$, $n$-ancilla $\QACZ[\FAN_k]$ circuit $C$ to prepare the state,
$$\ket{\varphi} = \frac{\ket{0^\ell} \ket{0}+ \ket{\psi_*} \ket{1}}{\sqrt{2}} \otimes \ket{\nu}$$
where $\ket{\nu}$ may be an arbitrary state. Then, there exists a depth $O(d)$, $O(n + \ell)$-ancilla circuit that \emph{cleanly} prepares $\ket{\psi_*}$. 
\end{corollary}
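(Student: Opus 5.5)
The strategy is to build a depth-$O(d)$ unitary $V$ that maps $\ket{0^\ell}\ket{0}\ket{0\cdots0}$ to $\ket{0^\ell}\ket{0}\ket{0\cdots0}$ and maps $\ket{0^\ell}\ket{1}\ket{0\cdots0}$ to $\ket{\psi_*}\ket{1}\ket{0\cdots0}$, with every ancilla returned to zero. Such a $V$ is a clean controlled preparation of $\ket{\psi_*}$. Applying it to $\ket{0^\ell}\ket{b}$ with $b=1$ then yields $\ket{\psi_*}$ cleanly. We could instead take $b$ in $\ket{+}$ and use the flag, but the $b=1$ version is enough. We will invoke Lemma 10 of \cite{gretta2026polylogarithmicweightdickestatesqac0} for this step. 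As we read it, that lemma says a dirty preparation of a ``half-marked'' state of this form can be converted into a clean one by a constant number of calls to $C$ and $C^\dagger$, together with reflections. We only need to check that every operation it uses stays inside $\QACZ[\FAN_k]$ and on $O(n+\ell)$ qubits.

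Concretely, the plan is as follows. Write $C\ket{0}=\ket{\varphi}$. The flag qubit $f$ (the $\ket{0}/\ket{1}$ qubit) is in $\ket{+}$ and is unentangled from $\ket{\nu}$. Since $\ket{\nu}$ factors out, we can first apply $C$ and then measure-free ``disentangle'' the target from the flag.

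\textbf{Step 1.} Apply $C$ to obtain $\frac{1}{\sqrt2}(\ket{0^\ell}\ket0+\ket{\psi_*}\ket1)\ket{\nu}$.

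\textbf{Step 2.} Apply a reflection about $\ket{+}$ on $f$ controlled by nothing, namely a Hadamard-type rotation on $f$. It is chosen so that, within the two-dimensional span of $\ket{0^\ell}\ket0$ and $\ket{\psi_*}\ket1$ (tensored with $\ket\nu$), we can use $C$, $C^\dagger$ and the reflection $I-2\kb{0^\ell 0 \,0^n}$ to perform an oblique amplitude-amplification (Grover) iteration. The initial overlap with the target branch $\ket{\psi_*}\ket1\ket\nu$ is exactly $1/2$, so a single iteration rotates exactly onto $\ket{\psi_*}\ket1\ket\nu$. The usual ``amplitude $1/2$ amplifies to $1$ in one step'' identity applies: $\sin(3\theta)=1$ for $\theta=\pi/6$.

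\textbf{Step 3.} The remaining obstacle is $\ket\nu$, which is dirty. Once we hold $\ket{\psi_*}_Q\ket1_f\ket\nu$, we uncompute $\ket\nu$ using the structure $\ket\varphi=\ket{\phi}\otimes\ket\nu$. Copy $Q$ out of the way by swapping it into $\ell$ fresh qubits, and reset $f$ with an $X$. This leaves $\ket{0^\ell}\ket0\ket\nu$ on $C$'s registers, except that $\ket{\nu}$ must still be removed. Applying $C^\dagger$ to $\ket{\phi}\ket\nu$ returns zero, and $\ket{0^\ell}\ket0\ket\nu$ differs from that. Handling this is exactly the content of Lemma 10: re-prepare $\ket{\phi}$ on the $Q,f$ registers from $\ket{0^\ell}\ket0$ controlled by nothing, using a second copy of $C$ restricted to those registers. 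Since $\ket\nu$ factors, $C$ acts as $V_\phi\otimes V_\nu$ only up to entangling intermediate steps, so the argument is structured to use $C^\dagger$ only on the full product state $\ket\phi\ket\nu$. For that, we prepare $\ket\phi$ in the fresh ancilla from a second run, then use controlled-swaps (each of width $\ell+1$, hence in $\QACZ$) to exchange roles.

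I expect Step 3, the uncomputation of the dirty register $\ket\nu$ without $\FANOUT$, to be the main obstacle. My first attempt there is to follow Lemma 10 verbatim. The only new checks are that the reflection $I-2\kb{0\cdots0}$ is a single Toffoli conjugated by $X$'s, which is fine, and that every swap acts on $O(\ell)$ qubits. The ancilla count is then $O(n+\ell)$ and the depth is a constant multiple of $d$.
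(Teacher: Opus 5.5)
\textbf{There is a genuine gap in Step 3, which is the real content of this corollary.}

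After you swap $\ket{\psi_*}$ into fresh qubits $T$ and reset $f$, you hold $\ket{\psi_*}_T\ket{0^\ell}_Q\ket{0}_f\ket{\nu}_A$. Your remedy is to run a second copy of $C$ on fresh registers, swap its $\ket{\phi}$ part into $Q,f$, and apply $C^\dagger$. This only relocates the junk. The second run deposits a new copy of $\ket{\nu}$ on its own ancillae, so once $C^\dagger$ clears $A$ you are left with $\ket{0^\ell}\ket{0}\ket{\nu}$ on the new registers, exactly where you started. Running ``$C$ restricted to $Q,f$'' is not meaningful either, since $C$ need not factor.

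Deferring to Lemma 10 does not close this. The paper uses that lemma only to obtain two circuits $C_0,C_1$ with $C_0\ket{0\cdots0}=\ket{\psi_*}\ket{\nu}$ and $C_1\ket{0\cdots0}=\ket{0^\ell}\ket{\nu}$, both carrying the \emph{same} junk $\ket{\nu}$. The missing idea is to use the second circuit. The state $\ket{0^\ell}\ket{0}\ket{\nu}$ you reach in Step 3 is precisely the output of $C_1$, so $C_1^\dagger$ returns those registers to zero. This is the paper's proof: run $C_0$ on $(X,A)$ and then $C_1^\dagger$ on $(Y,A)$, giving $\ket{\psi_*}_X\ket{0^\ell}_Y\ket{0^n}_A$. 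You can get $C_1$ from the same amplification that gives $C_0$, marking the $\ket{0^\ell}\ket{0}$ branch instead. Since the Grover iterate acts inside the span of the two branches, both outputs carry the identical $\ket{\nu}$.

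\textbf{Step 2 is also wrong as written.} The marked branch $\ket{\psi_*}\ket{1}\ket{\nu}$ has amplitude $1/\sqrt{2}$, not $1/2$. So $\theta=\pi/4$, and after any number $k$ of standard Grover iterations the marked amplitude is $\sin((2k+1)\pi/4)=\pm 1/\sqrt{2}$. A rotation on $f$ cannot repair this: any non-diagonal rotation mixes $\ket{0^\ell}$ and $\ket{\psi_*}$ inside the $f=1$ component.

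One fix is to adjoin a qubit $a$ in $\ket{+}$ and mark $f=a=1$, which makes the marked amplitude exactly $1/2$. One iteration built from $C$, $C^\dagger$, $I-2\kb{0\cdots0}$ and a $\mathrm{CZ}$ on $(f,a)$ then lands exactly on $\ket{\psi_*}\ket{1}_f\ket{1}_a\ket{\nu}$. Marking $f=0,a=1$ instead gives $\ket{0^\ell}\ket{0}_f\ket{1}_a\ket{\nu}$. After resetting the known flag bits, these are the circuits $C_0$ and $C_1$ required above.
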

\begin{proof}
Applying Lemma 10 of \cite{gretta2026polylogarithmicweightdickestatesqac0} to $C$ provides two circuits $C_0, C_1$ that cleanly prepare the states $\ket{\psi_*} \ket{\nu}$ and $\ket{0^{\ell}} \ket{\nu}$ respectively. 

Then, one can cleanly prepare $\ket{\psi_*}$ as follows: (1) start with registers $\ket{0^{\ell}}_X \ket{0^{\ell}}_Y \ket{0^{n}}_A$, (2)  run $C_0$ on $X,A$ and, (3) run $C_1^\dag$ ($C_1$ in reverse) on $Y,A$. 
To verify, 
\begin{align}
    C_1^\dag(Y,A) C_0(X,A) \ket{0^{\ell}}_X \ket{0^{\ell}}_Y \ket{0^{n}}_A &= C_1^\dag(Y,A) \ket{\psi_*}_X \ket{0^{\ell}}_Y \ket{\nu}_A \\ 
    &= \ket{\psi_*}_X \ket{0^{\ell}}_Y \ket{0^n}_A
\end{align}
as claimed.
\end{proof}

 Now we complete the proof of \cref{thm:anylogn}.  
\begin{proof}[Proof of \cref{thm:anylogn}]\label{proof:here}
Let $\ket{\psi_*}$ be the desired state and let $\ket{\psi'} := \frac{\ket{0^\ell}\ket{0} + \ket{\psi_*} \ket{1}}{\sqrt{2}}$ be an $\ell + 1$ qubit state. 

Recall that for any $k = \polylog(n)$, $\QACZ = \QACZ[\FAN_k]$. 
Combining \cref{cl:smampl} and \cref{cor:parallel} gives a $\poly(n)$ size $\QACZ$ circuit $C'$ for the dirty preparation of $\ket{\psi'}$, i.e, 
such that for $m = \poly(n)$, $C' \ket{0^{\ell+1} } \ket{0^m} = \ket{\psi'} \ket{\nu}$,  for some arbitrary $\ket{\nu}$ on the ancillae. Finally, \cref{cor:clean} provides a $\QACZ$ circuit that cleanly prepares $\ket{\psi_*}$ using $\poly(n)$ ancillae. 
\end{proof}

\section{Acknowledgements} \label{sec:acknowledge}
We credit OpenAI's GPT 6 Pro Astra for providing us the observation leading to \cref{eq:herew}, which was the missing piece in our prior proof attempts. Astra additionally suggested the clean-up strategy we use in \cref{cor:clean}. The proofs presented here are our own versions, written by ourselves, and we take responsibility for their correctness. 
\appendix
\bibliographystyle{alpha}
\bibliography{main}

@inproceedings{rosenthal2021qac0,
  author       = {Gregory Rosenthal},
  title        = {Bounds on the QAC{\^{}}0 Complexity of Approximating Parity},
  booktitle    = {{ITCS}},
  series       = {LIPIcs},
  volume       = {185},
  pages        = {32:1--32:20},
  publisher    = {Schloss Dagstuhl - Leibniz-Zentrum f{\"{u}}r Informatik},
  year         = {2021}
}

@article{moore1999qac0,
  author       = {Cristopher Moore},
  title        = {Quantum Circuits: Fanout, Parity, and Counting},
  journal      = {Electron. Colloquium Comput. Complex.},
  volume       = {{TR99-032}},
  year         = {1999}
}

@article{hoyer2005fanout,
  author       = {Peter Hoyer and
                  Robert Spalek},
  title        = {Quantum Fan-out is Powerful},
  journal      = {Theory Comput.},
  volume       = {1},
  number       = {1},
  pages        = {81--103},
  year         = {2005},
  url          = {https://doi.org/10.4086/toc.2005.v001a005},
  doi          = {10.4086/TOC.2005.V001A005},
  bibsource    = {dblp computer science bibliography, https://dblp.org}
}

@article{takahashi2012collapse,
  author       = {Yasuhiro Takahashi and
                  Seiichiro Tani},
  title        = {Collapse of the Hierarchy of Constant-Depth Exact Quantum Circuits},
  journal      = {Comput. Complex.},
  volume       = {25},
  number       = {4},
  pages        = {849--881},
  year         = {2016}
}

@article{lmn1993ac0,
  author       = {Nathan Linial and
                  Yishay Mansour and
                  Noam Nisan},
  title        = {Constant Depth Circuits, Fourier Transform, and Learnability},
  journal      = {J. {ACM}},
  volume       = {40},
  number       = {3},
  pages        = {607--620},
  year         = {1993}
}

@article{hastad1986switch,
  author       = {Johan Hastad},
  title        = {Almost Optimal Lower Bounds for Small Depth Circuits},
  journal      = {Adv. Comput. Res.},
  volume       = {5},
  pages        = {143--170},
  year         = {1989}
}

@misc{grier2026tc0,
      title={$\mathsf{QAC}^0$ Contains $\mathsf{TC}^0$ (with Many Copies of the Input)}, 
      author={Daniel Grier and Jackson Morris and Kewen Wu},
      year={2026},
      eprint={2601.03243},
      archivePrefix={arXiv},
      primaryClass={cs.CC},
      url={https://arxiv.org/abs/2601.03243}, 
}

@misc{cleandicke2026,
      title={Constant-Depth Unitary Preparation of Dicke States}, 
      author={Malvika Raj Joshi and Francisca Vasconcelos},
      year={2026},
      eprint={2601.10693},
      archivePrefix={arXiv},
      primaryClass={quant-ph},
      url={https://arxiv.org/abs/2601.10693}, 
}

@misc{gretta2026paritynotinqac0iff,
      title={Parity $\notin$ QAC0 $\iff$ QAC0 is Fourier-Concentrated}, 
      author={Lucas Gretta and Meghal Gupta and Malvika Raj Joshi},
      year={2026},
      eprint={2604.02793},
      archivePrefix={arXiv},
      primaryClass={quant-ph},
      url={https://arxiv.org/abs/2604.02793}, 
}

@article{rosenthal2021query,
  author       = {Gregory Rosenthal},
  title        = {Query and Depth Upper Bounds for Quantum Unitaries via Grover Search},
  journal      = {CoRR},
  volume       = {abs/2111.07992},
  year         = {2021},
  url          = {https://arxiv.org/abs/2111.07992},
  eprinttype   = {arXiv},
  eprint       = {2111.07992},
  bibsource    = {dblp computer science bibliography, https://dblp.org},
  note         = {Presented at the 17th Conference on the Theory of Quantum Computation, Communication and Cryptography (TQC 2022)}
}

@misc{gretta2026polylogarithmicweightdickestatesqac0,
      title={Polylogarithmic-Weight Dicke States in QAC$^0$ and Arbitrary Symmetric States in QAC$^0_f$}, 
      author={Lucas Gretta and Meghal Gupta and Malvika Raj Joshi},
      year={2026},
      eprint={2604.15298},
      archivePrefix={arXiv},
      primaryClass={quant-ph},
      url={https://arxiv.org/abs/2604.15298}, 
}

@article{moore2001parallel,
author = {Moore, Cristopher and Nilsson, Martin},
title = {Parallel Quantum Computation and Quantum Codes},
journal = {SIAM Journal on Computing},
volume = {31},
number = {3},
pages = {799-815},
year = {2001},
doi = {10.1137/S0097539799355053},
}

@book{nielsen2000quantum,
  title={Quantum Computation and Quantum Information},
  author={Nielsen, M.A. and Chuang, I.L.},
  isbn={9780521635035},
  lccn={98022029},
  series={Cambridge Series on Information and the Natural Sciences},
  url={https://books.google.com/books?id=65FqEKQOfP8C},
  year={2000},
  publisher={Cambridge University Press}
}

@misc{gottesman1998heisenbergrepresentationquantumcomputers,
      title={The Heisenberg Representation of Quantum Computers}, 
      author={Daniel Gottesman},
      year={1998},
      eprint={quant-ph/9807006},
      archivePrefix={arXiv},
      primaryClass={quant-ph},
      url={https://arxiv.org/abs/quant-ph/9807006}, 
}
\section{Skipped Proofs}\label[appendix]{sec:aproofs}
We prove the remaining corollary below.
\parallelamp
\begin{proof}
Let $t = \lfloor \alpha^{-1} \rfloor$. First, starting with $(m + n) \cdot t$ ancillae, construct $t$ copies of $\ket{\varphi}$ in parallel, each on a $\ell$ qubit target register labeled $Q_i$ and a corresponding ancillae $A_i$ and $v_i$ (for the marked register). Let $V := \clr{v_1, v_2 \dots v_t}$ and $Q_{\all} := Q_1 \cup Q_2 \dots Q_t$ and $A_{\all} := A_1 \cup A_2 \dots A_t$. The resulting state can be written as, 
\begin{align}
    \ket{\psi_0} &:= \bigotimes_{i \in [t]} \lr{\sqrt{\alpha} \ket{\psi_*}_{Q_i} \ket{\mu}_{A_i} \ket{1}_{v_i} + \sqrt{1-\alpha} \ket{\nu}_{Q_i, A_i} \ket{0}_{v_i}} \\
    &= \sqrt{p_*} \lr{\frac{1}{\sqrt{t}} \cdot \sum_{i \in [t]} \ket{\psi_*}_{Q_i} \ket{\mu}_{A_i} \ket{1}_{v_i} \ket{\nu'}_{Q_{\all} \setminus Q_i, A_{\all} \setminus A_i} \ket{0^{t-1}}_{V \setminus v_i}} +  \sqrt{1-p_*} \ket{\bad}.
\end{align}
where $p_* = \Pr[\binomd(t,\alpha) = 1] = \Theta(\Pr[\binomd(t,1/t)]) = \Theta(1)$ and $\ket{\nu'} = \ket{\nu}^{\otimes (t-1)}$. 
Recall that we have the $\EXACT_1$ gate in $\QACZ$ (\cite{cleandicke2026}) using $O(1/\alpha^2)$ ancillae.
Apply this to $V$ with a fresh ancilla to obtain, 
\begin{align}
\ket{\psi_1} &:= \sqrt{p_*} \lr{\frac{1}{\sqrt{t}} \sum_{i \in [t]} \ket{\psi_*}_{Q_i} \ket{\mu}_{A_i} \ket{1}_{v_i} \ket{\nu'}_{Q_{\all} \setminus Q_i, A_{\all} \setminus A_i} \ket{0^{t-1}}_{V \setminus v_i}} \ket{1}_{v_0}  \nonumber \\  & \ \ \ \ \ \  \ + \sqrt{1-p_*} \ket{\bad} \ket{0}_{v_0}.
\end{align}
Then apply amplitude amplification (\cite{grier2026tc0}/Corollary 3.16 of \cite{gretta2026polylogarithmicweightdickestatesqac0}) to obtain in $O(d/p_*)$ depth, the state, 
\begin{align}
\ket{\psi_2} :=  \frac{1}{\sqrt{t}} \sum_{i \in [t]} \ket{\psi_*}_{Q_i} \ket{\mu}_{A_i}  \ket{1}_{v_i} \ket{\nu'}_{Q_{\all} \setminus Q_i, A_{\all} \setminus A_i}\ket{0^{t-1}}_{V \setminus v_i},
\end{align}
where we dropped the cleaned up ancilla $v_0$. 
Now prepare a fresh $\ell$-qubit target register $T$, and apply $t$ $\ctrl{\swap(v_i, Q_i, T)}$ gates, each swapping the registers $Q_i, T$ controlled on $v_i$ to obtain,  
\begin{align}
\ket{\psi_3}_{Q_{\all},A,T} &:= \lr{\prod_{i \in [t]}  \ctrl{\swap(v_i, Q_i, T)}} \cdot \ket{\psi_2} \ket{0^\ell}_T \\
&= \frac{1}{\sqrt{t}} \sum_{i \in [t]} \swap(Q_i,T) \ket{\psi_*}_{Q_i} \ket{0^{\ell}}_T \ket{\mu}_{A_i}  \ket{1}_{v_i} \ket{\nu'}_{Q_{\all} \setminus Q_i, A_{\all} \setminus A_i}\ket{0^{t-1}}_{V \setminus v_i} \\
&= \ket{\eta}_{Q_{\all}, A_{\all}, V} \ket{\psi_*}_T,
\end{align}
for some residual ancilla state $\ket{\eta}$. Due to \cref{lem:ctrswap} this transformation can be implemented in parallel in $O(1)$ depth using $\FANOUT_\ell$. 
\end{proof}

\end{document}